\newtheorem{theorem}{Theorem}[section]
\newtheorem{proposition}[theorem]{Proposition}
\newtheorem{lemma}[theorem]{Lemma}
\newtheorem{remark}[theorem]{Remark}
\newtheorem{definition}[theorem]{Definition}
\numberwithin{equation}{section}
\renewcommand{\L}{\mathcal{L}}
\newcommand\Zop{\mathbb{Z^{\mathrm{odd}}_+}}
\renewcommand{\d}{\operatorname{d}}
\renewcommand{\d}{\partial}
\newcommand{\Z}{{\mathbb{Z}}}
\newcommand{\N}{{\mathbb{N}}}
\renewcommand{\L}{\mathcal{L}}
\newcommand{\M}{\mathcal{M}}
\newcommand{\B}{\mathcal{B}}
\newcommand{\pa}{\partial}
\def\u{1\kern -0.7em \hbox {1}}
\def\Z{\bf Z}
\begin{document}
\title {Dispersionless and multicomponent BKP hierarchies with quantum torus symmetries}
\author{
Chuanzhong Li\  } \dedicatory {  Department of Mathematics,
 Ningbo University, Ningbo 315211, China,\\
Email:lichuanzhong@nbu.edu.cn}

\begin{abstract}
In this article, we will construct  the additional perturbative quantum torus symmetry of the dispersionless BKP hierarchy basing on the $W_{\infty}$ infinite dimensional  Lie symmetry.
 These results show that the complete quantum torus symmetry is broken from the BKP hierarchy to its dispersionless hierarchy. Further  a series of  additional flows of the multicomponent BKP hierarchy will be defined and these flows constitute an $N$-folds direct product of the
positive half of the quantum torus symmetries.
\end{abstract}

\maketitle
Mathematics Subject Classifications(2000).  37K05, 37K10, 37K20.\\
Keywords: multicomponent BKP hierarchy, dispersionless  BKP hierarchy, additional symmetry, $W_{\infty}$  Lie algebra, quantum torus algebra, perturbative quantum torus algebra.\\
\tableofcontents

\section{Introduction}

The KP hierarchy  is one of the most important integrable hierarchies\cite{DKJM} and it arises in many different fields of mathematics and physics such as the enumerative algebraic geometry, topological field and string theory.   One of the most important study on the KP hierarchy is the  theoretical  description of the solutions  of the KP hierarchy using Lie groups and Lie algebras such as in \cite{DKJM,dickey,kac}, which is closely related to the infinite dimensional Grassmann manifolds \cite{sato1,sato2}.

In \cite{2},   Date,   Jimbo,   Kashiwara and   Miwa extended their work on the KP hierarchy to the multicomponent KP hierarchy.
In \cite{1}, Takasaki and Takebe derived   a series of differential Fay identities for the multicomponent KP hierarchy from the bilinear identities  and they showed that their dispersionless limits give rise to the universal Whitham hierarchy. Besides the multicomponent KP hierarchy, the extended and reduced multicomponent Toda hierarchies attract a lot of studies \cite{manas,EMTH,EZTH}.

Additional symmetries have been studied in the explicit form of the
additional flows of the KP hierarchy  by Orlov and Shulman
\cite{os1}. This kind of additional  flows depend on dynamical variables
explicitly and constitute a centerless
$W_{1+\infty}$ algebra which is closely related to the theory of matrix
models\cite{D witten,Douglas} by the Virasoro constraint and
string equations.
 As a generalization of the Virasoro algebra, the Block algebra was studied a lot in the field of Lie algebras and it was studied intensively in
references\cite{Block}-\cite{Su}. In another paper\cite{ourBlock}, we give a novel Block type
additional symmetry of the bigraded Toda hierarchy(BTH). Later we did a series of studies on integrable systems and Block algebras such as in \cite{dispBTH,blockDS,blockdDS}. After the quantization, the Block Lie algebra becomes the so-called quantum torus Lie algebra which can be seen in several recent  references as \cite{takasakiquantum,torus}.

It is well known that the KP hierarchy has two sub-hierarchies, i.e.
the BKP hierarchy and CKP hierarchy. About the BKP hierarchy, a lot of studies on additional symmetries have been done, such as additional symmetries of the BKP hierarchy\cite{TU}, dispersionless BKP hierarchy\cite{disBKP,TUdbkp}, two-component BKP hierarchy\cite{LWZ} and its reduced hierarchies \cite{blockDS,blockdDS}, supersymmetric BKP hierarchy \cite{NPB} and so on.
Dispersionless integrable systems\cite{Takasaki} are very important in
the study of all kinds of nonlinear sciences in
physics, particularly in the application on the
topological field theory \cite{Takasakicmp} and matrix models
\cite{Kodama-Pierce,matrix model}. In particular,
dispersionless integrable systems have many typical properties  such as the Lax pair, infinite conservation
laws, symmetries and so on.  On dispersionless integrable systems, we also did several studies such as \cite{dispBTH,blockdDS}.
With the above preparation, we should pay our attention to the quantum torus type additional symmetry of the multi-component BKP hierarchy\cite{multiBKP} and dispersionless BKP hierarchy\cite{disBKP,disBKP2} from the points of the multi-component generalization of Lie algebras and the importance of dispersionless integrable systems.

 In the next section, we firstly review the Lax equations of the BKP and dispersionless BKP hierarchies. In Section 3, under the basic Sato theory, we construct  the additional symmetries of the dispersionless BKP hierarchy and the symmetries form a  perturbative quantum torus    Lie algebra.
 In Section 4, we recall the Lax equation of the multicomponent BKP hierarchy. In Section 5, we construct  the additional symmetry of the multicomponent BKP hierarchy which turns out to be in an $N$-folds direct product of the infinite dimensional complete  quantum torus  Lie algebra.

\section{ BKP hierarchy and dispersionless BKP hierarchy}

Similarly to the general way in describing the classical the BKP hierarchy
\cite{DKJM,dickey}, we will give a brief introduction of the BKP hierarchy.
We denote ``$*$" as a formal adjoint operation defined by $A^{*}=\sum(-1)^{i}\partial^{i}a_i$ for an arbitrary
scalar-valued pseudo-differential operator $A=\sum a_i\partial^{i}$, and $(AB)^{*}=B^{*}A^{*}$ for two scalar operators $A,B$.
Basing on the definition, the Lax operator of the BKP hierarchy is as
\begin{equation} \label{PhP}
L_B= \d+\sum_{i\ge1}v_i  \d^{-i},
\end{equation}
 such that

\begin{equation}\label{Bcondition}
L_B^*=-\d L_B\d^{-1}.
\end{equation}
The eq.\eqref{Bcondition} will be called the B type condition of the BKP hierarchy.

The  BKP hierarchy is defined by the following
Lax equations:
\begin{align}\label{bkpLax}
& \frac{\d  L_B}{\d t_k}=[(L_B^k)_+,  L_B],   \ k\in\Zop.
\end{align}
The ``+" in \eqref{bkpLax} means the nonnegative projection about the operator ``$\partial$" and ``-"  means the negative projection.
Note that the $\d/\d t_1$ flow is equivalent to the $\d/\d x$ flow, therefore it is reasonable to
assume $t_1=x$ in the next sections. The Lax operator $L_B$ can be generated by a dressing operator
$\Phi_B=1+ \sum_{k=1}^{\infty}\bar\omega_k
\partial^{-k}$ in the following way
\begin{equation}
L_B=\Phi_B  \partial \Phi_B^{-1},
\end{equation}
where $\Phi_B$
 satisfies
\begin{equation}\label{phipsi}
\Phi_B^*= \d\Phi_B^{-1} \d^{-1}.
\end{equation}
The dressing operator $\Phi_B$ needs to satisfy the following Sato equations
\begin{equation}
\dfrac{\partial \Phi_B}{\partial t_n}=-(L_B^{n})_-\Phi_B, \quad n=1,3,5, \cdots.
\end{equation}

Introduce
firstly the Lax function of dispersionless BKP hierarchy \cite{disBKP,disBKP2} as following
\begin{equation}\label{Lax operator}
L=k+u_1k^{-1}+u_3k^{-3}+\dots +\dots
\end{equation}
where the coefficients $u_1, u_3,\ldots$ of the Lax function are same as eq.\eqref{Lax}.
  The variables $u_j$ are functions of the real variable $x$. The Lax function $L$ can be
written as
  \[L=e^{ad\varphi}(k ),\ \ \ ad\varphi(\psi) = \{\varphi, \psi\}=\frac{\partial  \varphi}{\partial k}\frac{\partial  \psi}{\partial x}-\frac{\partial \varphi}{\partial k}\frac{\partial   \psi}{\partial x}.\]

The dressing function has the following form \begin{eqnarray}
&& \varphi=\sum_{n=1}^\infty \varphi_{2n} k^{-2n+1}.
\label{dressP}\end{eqnarray}

  The
dressing function $\varphi$ is unique up to adding some Laurent series about the variable $k$ with  coefficients which do not depend on $x$.
The dispersionless BKP hierarchy can be defined as following.
\begin{definition} \label{deflax}
The dispersionless BKP hierarchy consists of flows given in the  Lax pair by
\begin{equation}
\frac{\pa L}{\pa t_{n}}=\{\B_{n}, L\}=\frac{\partial  \B_{n}}{\partial k}\frac{\partial L}{\partial x}-\frac{\partial L}{\partial k}\frac{\partial  \B_{n}}{\partial x},\ \ n\in \Zop,
\end{equation}
where the functions
$\B_{n}$ are defined by
$\B_{n}:=(L^n )_+,\ \ \ n\in \Zop.$

\end{definition}
The ``+" here means the nonnegative projection about the variable ``$k$"   and ``-"  means the negative projection.
The Lax equation of the dispersionless BKP hierarchy can lead to the following dispersionless Sato equations in the next proposition.
\begin{proposition}$L$ is the Lax function of the
dispersionless BKP if and only if  there exists a Laurent series $\varphi$
({\it dressing fucntion}) which satisfies the equations
\begin{eqnarray}\label{sato dis}
    \nabla_{t_{n},\varphi} \varphi& =& -(\B_{n})_-, \ \ n\in \Zop,\end{eqnarray}

where
$$
 \nabla_{t_{n},
\psi} \varphi = \sum_{m=0}^\infty \frac1{(m+1)!} (ad\psi)^m \left(
\frac{\d \varphi}{\d  t_{n}} \right).
$$
The Laurent function $\varphi$  is  unique up to a transformation $\varphi \mapsto
H(\varphi,\psi)$, with a constant Laurent series $\psi =
\sum_{n=1}^\infty \psi_{2n} k^{-2n+1}$ ($\psi_{2n}$: constant),
where $H(X,Y)$ is the Hausdorff series  defined by
$$
    \exp(ad H(\varphi,\psi)) = \exp(ad\varphi) \exp(ad\psi).
$$

\end{proposition}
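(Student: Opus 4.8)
The plan is to isolate one kinematic identity and to deduce both implications from it. Regard $e^{ad\varphi}$ as an operator on Laurent series in $k$ acting through the Poisson bracket, so that $L=e^{ad\varphi}(k)$. The point is that $\chi\mapsto ad\chi=\{\chi,\,\cdot\,\}$ is a Lie-algebra homomorphism into operators under the commutator, $[ad\varphi,ad\chi]=ad\{\varphi,\chi\}$, whence $e^{s\,ad\varphi}(ad\chi)e^{-s\,ad\varphi}=ad(e^{s\,ad\varphi}\chi)$, and is simultaneously a derivation of the pointwise product, so that $e^{ad\varphi}(k^{j})=L^{j}$. Inserting the first fact into Duhamel's formula
\[
\pa_{t_n}e^{ad\varphi}=\int_0^1 e^{s\,ad\varphi}\,(ad\,\pa_{t_n}\varphi)\,e^{(1-s)ad\varphi}\,ds
\]
and collecting $e^{ad\varphi}$ on the right gives $\pa_{t_n}e^{ad\varphi}=ad\!\big(\tfrac{e^{ad\varphi}-1}{ad\varphi}\pa_{t_n}\varphi\big)e^{ad\varphi}=ad(\nabla_{t_n,\varphi}\varphi)\,e^{ad\varphi}$, the middle factor being exactly the series defining $\nabla_{t_n,\varphi}\varphi$. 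Applying both sides to the $t_n$-independent symbol $k$ yields the identity
\[
\frac{\pa L}{\pa t_n}=\{\nabla_{t_n,\varphi}\varphi,\,L\},
\]
which holds for every dressing function, prior to any equation of motion.

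From here the implication ``Sato $\Rightarrow$ Lax'' is immediate: if $\nabla_{t_n,\varphi}\varphi=-(L^n)_-=\B_n-L^n$, then by $\{L^n,L\}=0$ the identity becomes $\pa_{t_n}L=\{\B_n-L^n,L\}=\{\B_n,L\}$, which is the Lax equation of Definition~\ref{deflax}.

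For the converse I would substitute the Lax equation into the kinematic identity to get $\{\nabla_{t_n,\varphi}\varphi-\B_n,\,L\}=0$, and then use that, since $L=k+O(k^{-1})$ is a coordinate, the kernel of $\{\,\cdot\,,L\}$ consists precisely of Laurent series $\sum_j c_j(t)L^{j}$ with coefficients free of $x$ and $k$. Comparing top powers of $k$ (here $\nabla_{t_n,\varphi}\varphi=O(k^{-1})$ while $\B_n=k^n+\cdots$) fixes $c_n=-1$, and demanding that $\nabla_{t_n,\varphi}\varphi+(L^n)_-$ be purely negative kills $c_j$ for $0\le j\le n-1$; the BKP parity $\varphi=\sum\varphi_{2n}k^{-2n+1}$ (preserved because the bracket of two odd series is odd) kills the even $j<0$ as well. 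This leaves $\nabla_{t_n,\varphi}\varphi=-(L^n)_-+\sum_{m\ge1}c^{(n)}_{2m}(t)\,L^{-2m+1}$, and the residual terms are exactly the non-uniqueness of the dressing function recorded after \eqref{dressP}. I expect the genuine obstacle to lie here: one normalizes $\varphi$ by a gauge $\varphi\mapsto H(\varphi,\psi)$ with $\psi=\sum\psi_{2n}(t)k^{-2n+1}$ taken $x$-independent but time-dependent; a short computation using $e^{ad\varphi}(k^j)=L^j$ shows this shifts $\nabla_{t_n,\varphi}\varphi$ by $\sum_m(\pa_{t_n}\psi_{2m})L^{-2m+1}$, so the residual terms can be absorbed provided the system $\pa_{t_n}\psi_{2m}=-c^{(n)}_{2m}$ is solvable simultaneously in all times $n\in\Zop$. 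Establishing that compatibility (equality of the mixed $t$-derivatives of the $c^{(n)}_{2m}$), which must be inherited from the commutativity of the hierarchy flows, is the main technical step.

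Finally, for the uniqueness clause I would take two admissible dressing functions with $e^{ad\varphi_1}(k)=e^{ad\varphi_2}(k)=L$. Then $e^{ad\psi}:=e^{-ad\varphi_2}e^{ad\varphi_1}$ fixes $k$, which forces $\psi$ to be $x$-independent, and by the defining relation of the Hausdorff series this says precisely $\varphi_1=H(\varphi_2,\psi)$. Differentiating $e^{ad\varphi_1}=e^{ad\varphi_2}e^{ad\psi}$ in $t_n$ and using the gauge behaviour of $\nabla_{t_n,\varphi}\varphi$ just computed, the Sato equation for both $\varphi_1$ and $\varphi_2$ forces $\pa_{t_n}\psi=0$ for all $n$; hence the coefficients $\psi_{2n}$ are genuine constants, giving exactly the stated freedom.
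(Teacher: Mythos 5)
A preliminary remark on the comparison itself: the paper states this proposition without giving any proof (it is imported from the dispersionless BKP literature, essentially the Takasaki--Takebe dressing formalism), so there is no in-text argument to measure yours against. Judged on its own, your outline follows the standard route, and most of it is sound: the kinematic identity $\pa_{t_n}L=\{\nabla_{t_n,\varphi}\varphi,L\}$ via Duhamel, the implication from \eqref{sato dis} to the Lax equations of Definition~\ref{deflax}, the identification of $\Ker\{\cdot\,,L\}$ with series $\sum_j c_j(t)L^j$ whose coefficients are free of $x$ and $k$, the parity count coming from \eqref{dressP}, and the uniqueness clause are all correct as stated.

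The genuine gap is the one you flag and then leave open. In the converse direction you arrive at $\nabla_{t_n,\varphi}\varphi=-(L^n)_-+r_n$ with $r_n=\sum_{m\ge1}c^{(n)}_{2m}(t)L^{-2m+1}$, and the proposition is proved only if a single re-gauging $\varphi\mapsto H(\varphi,\psi)$ kills \emph{all} the $r_n$ simultaneously, i.e.\ only if the system $\pa_{t_n}\psi_{2m}=-c^{(n)}_{2m}$ is compatible; ``this must be inherited from the commutativity of the flows'' is an expectation, not an argument, and without it the ``only if'' half collapses. The step can be closed as follows. Writing $\omega_n=\nabla_{t_n,\varphi}\varphi$, equality of mixed $t$-derivatives of $e^{ad\varphi}$ gives the Maurer--Cartan identity $\pa_{t_m}\omega_n-\pa_{t_n}\omega_m+\{\omega_n,\omega_m\}=0$ (the possible central ambiguity is excluded because every term is $O(k^{-1})$). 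Separately, the Lax equations imply $\pa_{t_m}(L^n)_--\pa_{t_n}(L^m)_-=-\{(L^m)_-,(L^n)_-\}$, i.e.\ $\beta_n:=-(L^n)_-$ already satisfies the same zero-curvature relation. Subtracting, and using $\{r_n,r_m\}=0$, $\{\beta_n,r_m\}=\{\B_n,r_m\}$ and $\pa_{t_m}r_n=\sum_j(\pa_{t_m}c^{(n)}_{2j})L^{-2j+1}+\{\B_m,r_n\}$, every bracket cancels and one is left with $\sum_j\bigl(\pa_{t_m}c^{(n)}_{2j}-\pa_{t_n}c^{(m)}_{2j}\bigr)L^{-2j+1}=0$, which is exactly the closedness of the one-form $\sum_n r_n\,dt_n$ needed to integrate for $\psi$. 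Until this computation (or an equivalent one) is written out, your proof of the existence half is an outline rather than a proof; the remaining parts I would accept as they stand.
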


The simplest nontrivial flow in the dispersionless BKP hierarchy is the (2 + 1)-dimensional dispersionless BKP
equation :
\[3u_t + 15u^2u_x -5uu_y - 5u_x\d^{-1}u_y -\frac53\d^{-1}u_{yy} = 0.\]

\section{Perturbative quantum torus symmetry of the dispersionless BKP hierarchy}
In \cite{disBKP,TUdbkp}, they construct additional symmetries of the dispersionless BKP hierarchy.
In this section, we shall construct a specific kind of additional symmetries of the dispersionless BKP hierarchy and identify its nice quantum torus algebraic structure.

To this end, firstly we define the following dispersionless function $\Gamma$ and the dispersionless Orlov-Shulman's  function $M$  as
 \begin{equation}
\Gamma=x+\sum_{i\in \Zop}it_i\lambda^{i-1},\ \ M= e^{ad \varphi}( \Gamma)=\sum_{n=0}^{\infty}
(2n + 1)t_{2n+1}L^{2n} +\sum_{n=0}^{\infty}
v_{2n+2}L^{-2n-2}.
\end{equation}
The dispersionless Lax function $L$ and the dispersionless Orlov-Shulman's  function $M$ satisfy the following canonical relation
\[\{L,M\}=1.\]

Then basing on a quantum parameter $q$, the additional flows for the time variables $t_{m,n},t_{m,n}^*$ are
defined respectively as follows
\begin{equation}
 \nabla_{t_{m,2n+1},\varphi} \varphi=-(M^mL^{2n+1})_-,\  \nabla_{t^*_{m,n},\varphi} \varphi=-(e^{mM}(q^{nL}-q^{-nL}))_-,\ m,n \in \N,
\end{equation}
or equivalently rewritten as
\begin{equation}
\dfrac{\partial L}{\partial t_{m,2n+1}}=-\{(M^mL^{2n+1})_-,L\}, \qquad
\dfrac{\partial M}{\partial t_{m,2n+1}}=-\{(M^mL^{2n+1})_-,M\},
\end{equation}

\begin{equation}
\dfrac{\partial L}{\partial t^*_{m,n}}=-\{(e^{mM}(q^{nL}-q^{-nL}))_-,L\}, \qquad
\dfrac{\partial M}{\partial t^*_{m,n}}=-\{(e^{mM}(q^{nL}-q^{-nL}))_-,M\}.
\end{equation}

 Further one can also derive
\begin{equation}\label{MLK}
\partial_{t^*_{l,k}}(e^{mM}q^{nL}\ )=\{-(e^{lM}(q^{kL}-q^{-kL}))_-,e^{mM}q^{nL}\}.
\end{equation}

One can find the functions' set $\{M^mL^n,\ m,n\geq 0, \ 1\leq \alpha\leq N\}$ has an isomorphism with the operators' set $\{z^{n}\partial_z^m,\ m,n\geq 0, \ 1\leq \beta\leq N\}$ as
\begin{equation}
M^mL^n \qquad \mapsto\qquad  z^{n}\partial_z^m,
\end{equation}
with the following commutator
\begin{equation}
[M^mL^n,M^kL^l]=C_{ab}^{(mn)(kl)}M^aL^b.
\end{equation}

One can find the functions' set $\{e^{mM}q^{nL},\ m,n\geq 0, \ 1\leq \alpha\leq N\}$ has an isomorphism with the operators' set $\{q^{nz}e^{m\partial_z},\ m,n\geq 0, \ 1\leq \beta\leq N\}$ as
\begin{equation}
e^{mM}q^{nL} \qquad \mapsto\qquad  q^{nz}e^{m\partial_z},
\end{equation}
with the following commutator
\begin{equation}
[q^{nz}e^{m\partial_z},q^{lz}e^{k\partial_z}]=(q^{ml}-q^{nk})q^{(n+l)z}e^{(m+k)\partial_z}.
\end{equation}

The additional flows  $\dfrac{\partial }{\partial
t_{m,n}}$ will be proved later to  commute with the flows $\dfrac{\partial
}{\partial t_{ k}}$, i.e. $[\dfrac{\partial }{\partial
t_{m,n}},\dfrac{\partial
}{\partial t_{ k}}]=0$,  but they do not
commute with each other. They can form a  $W_{\infty}$ infinite dimensional   Lie algebra.
 This further leads to the commutativity of the additional flows $\dfrac{\partial }{\partial
t_{m,n}^*}$  with the flows $\dfrac{\partial
}{\partial t_{k}}$ and the additional flows $\dfrac{\partial }{\partial
t_{m,n}^*}$  themselves constitute a perturbative quantum torus algebra which will be proved later.

\begin{proposition}
The additional flows  $\partial_{t_{l,k}}$ are  symmetries of the dispersionless BKP hierarchy, i.e. they commute with all $\partial_{t_{ n}}$ flows of the  dispersionless BKP hierarchy.
\end{proposition}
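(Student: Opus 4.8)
The plan is to verify $[\partial_{t_{l,k}},\partial_{t_n}]=0$ by acting on the pair $(L,M)$, reducing the statement to a single Poisson-bracket identity via the Jacobi identity, and then eliminating the remaining term with two elementary $k$-degree projection lemmas. The decisive preliminary is that the ordinary flows transport the Orlov--Shulman function $M$ by exactly the same rule as they transport $L$, namely $\partial_{t_n}M=\{\B_n,M\}$ with $\B_n=(L^n)_+$. For $L$ this is the Lax equation of Definition~\ref{deflax}; for $M$ it must be extracted from $M=e^{\ad\varphi}(\Gamma)$ by combining the explicit time dependence $\partial_{t_n}\Gamma=nk^{n-1}$ with the dispersionless Sato equation $\nabla_{t_n,\varphi}\varphi=-(\B_n)_-$, exactly as in the classical Orlov--Shulman computation. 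Granting both transport laws, the ordinary Leibniz rule (everything here is a commutative Laurent series in $(k,x)$) together with the Poisson Leibniz rule gives, for $F:=M^{l}L^{k}$, the transport law $\partial_{t_n}F=\{\B_n,F\}$; and since $\partial_{t_n}$ commutes with the projections $(\cdot)_\pm$, this yields $\partial_{t_n}(M^{l}L^{k})_-=(\{\B_n,M^{l}L^{k}\})_-$.

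Next I set $A:=(M^{l}L^{k})_-$, so that the additional flow reads $\partial_{t_{l,k}}L=-\{A,L\}$, and note $\partial_{t_{l,k}}\B_n=\partial_{t_{l,k}}(L^n)_+=-(\{A,L^n\})_+$ because $\partial_{t_{l,k}}$ is a derivation commuting with the projection. I would then expand $[\partial_{t_n},\partial_{t_{l,k}}]L$ using the Leibniz rule for the two flows (both commute with $\partial_x,\partial_k$) and collapse the two double-bracket cross terms by the Jacobi identity, $\{\B_n,\{A,L\}\}-\{A,\{\B_n,L\}\}=-\{\{A,\B_n\},L\}$. Everything then lands inside a single outer bracket,
\[
[\partial_{t_n},\partial_{t_{l,k}}]L=-\{Z,L\},\qquad Z:=\partial_{t_n}A+\partial_{t_{l,k}}\B_n+\{A,\B_n\},
\]
so it suffices to prove $Z=0$.

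Inserting the two transport laws gives $Z=(\{\B_n,F\})_- -(\{A,L^n\})_+ +\{A,\B_n\}$ with $F=M^{l}L^{k}$ and $A=F_-$. Two degree lemmas in $k$ finish it: since $\partial_k$ lowers $k$-degree by one while $\partial_x$ preserves it, the Poisson bracket of two series supported in nonnegative powers of $k$ is again supported in nonnegative powers, whence $(\{\B_n,F_+\})_-=0$ and $(\{\B_n,F\})_-=(\{\B_n,F_-\})_-$; likewise the bracket of two series supported in strictly negative powers is again strictly negative, whence $(\{F_-,(L^n)_-\})_+=0$ and $(\{A,L^n\})_+=(\{A,\B_n\})_+$. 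Using antisymmetry $\{\B_n,F_-\}=-\{A,\B_n\}$, the three contributions reduce to $-(\{A,\B_n\})_- -(\{A,\B_n\})_+ +\{A,\B_n\}=0$, so $Z=0$ and the flows commute on $L$. The identical manipulation with $L$ replaced by $M$, using $\partial_{t_n}M=\{\B_n,M\}$ and $\partial_{t_{l,k}}M=-\{A,M\}$, gives commutativity on $M$, hence on the full solution, so the additional flows are symmetries.

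The main obstacle is the preliminary transport law $\partial_{t_n}M=\{\B_n,M\}$: because $\Gamma$, and therefore $M$, depends explicitly on the times $t_n$, the function $M$ is not simply dragged along by the dressing, and one must verify that the explicit contribution $\partial_{t_n}\Gamma$ recombines with the Sato (dressing) evolution of $e^{\ad\varphi}$ into the single clean bracket $\{\B_n,M\}$. Once this is secured the rest is routine: the Jacobi collapse is purely formal, and the two projection lemmas are immediate $k$-degree counts for $\{\,,\,\}=\partial_k(\cdot)\,\partial_x(\cdot)-\partial_x(\cdot)\,\partial_k(\cdot)$, since $\partial_k$ drops the top degree by one and $\partial_x$ leaves it unchanged.
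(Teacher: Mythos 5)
Your proof is correct and follows essentially the same route as the paper's: expand the commutator acting on $L$, collapse the two cross terms with the Jacobi identity, and kill the residual generator by $k$-degree projection identities. You are more explicit than the paper about the two supporting facts it leaves implicit --- the transport law $\partial_{t_n}M=\{\B_n,M\}$ and the vanishing of the final projected bracket --- and you consistently use the generator $-(M^lL^k)_-$, whereas the paper's displayed computation switches to $(M^mL^l)_+$, which is not an equivalent generator on the additional flows since $\{M^mL^l,L\}\neq 0$.
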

\begin{proof}
According the action of  $\partial_{t_{l,k}}$ and $\partial_{t_{ n}}$ on the
dressing function $L$, then
\begin{eqnarray*}
&&[\partial_{t_{m,l}},\partial_{t_{ n}}]L\\
 &=&\partial_{t_{m,l}}\partial_{t_{ n}}e^{ad \varphi}(k)-
 \partial_{t_{ n}}\partial_{t_{m,l}}e^{ad \varphi}(k)\\
&=& \partial_{t_{m,l}}\{\nabla_{t_{ n}, \varphi} \varphi,e^{ad
\varphi}(k)\}-
 \partial_{t_{ n}}\{\nabla_{t_{m,l}, \varphi} \varphi, e^{ad \varphi}(k)\}\\
&=& \partial_{t_{m,l}}\{-(\B_{n})_-,L\}-
 \partial_{t_{ n}}\{\left(M^m
\ L^l\right)_{+}, L\}\\
&=&
\{-\{\left(M^m\ L^l\right)_{+},\B_{n}\}_-,L\}+\{-(\B_{n})_-,\{\left(M^m
\ L^l\right)_{+},
L\}\}- \\
&&\{\{-(\B_{n})_-,M^m\ L^l\}_{+}, L\}- \{\left(M^m
\ L^l\right)_{+}, \{-(\B_{n})_-,L\}\}\\
&=&
\{-\{\left(M^m\ L^l\right)_{+},\B_{n}\}_-,L\}+\{L,\{(\B_{n})_-,\left(M^m
\ L^l\right)_{+}
\}\}\\
&&- \{\{-(\B_{n})_-,M^m\ L^l\}_{+}, L\}
\\
&=&\{L,\{(\B_{n})_-,\left(M^m\ L^l\right)_{+}
\}+\{\left(M^m\ L^l\right)_{+},\B_{n}\}_-+\\
&&\{-(\B_{n})_-,M^m\ L^l\}_{+}\}\\
&=&0.
\end{eqnarray*}
Therefore the proposition holds.
\end{proof}
With the help of this proposition, we can derive the following theorem.
\begin{theorem}
The additional flows $\partial_{t^*_{l,k}}$ are  symmetries of the dispersionless BKP hierarchy, i.e. they commute with all $\partial_{t_{n}}$ flows of the  dispersionless BKP hierarchy.
\end{theorem}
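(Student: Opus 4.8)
The plan is to bypass a second Jacobi-type computation and instead deduce the statement from the preceding Proposition by recognizing the generator of the $t^*_{l,k}$-flow as a constant-coefficient superposition of the $W_\infty$ generators $M^a L^{2j+1}$. Concretely, I would first expand the defining symbol as a formal power series in the canonically paired functions $L$ and $M$,
\[
e^{lM}\bigl(q^{kL}-q^{-kL}\bigr)=\sum_{a\ge 0}\sum_{j\ge 0} c_{a,j}\,M^a L^{2j+1},
\qquad
c_{a,j}=\frac{l^a}{a!}\cdot\frac{2\,(k\ln q)^{2j+1}}{(2j+1)!},
\]
where I have written $q^{kL}=e^{kL\ln q}$ and used that $q^{kL}-q^{-kL}$ is odd in $L$, so that only the odd powers $L^{2j+1}$ survive. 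These are exactly the powers appearing in the $W_\infty$ generators $M^a L^{2n+1}$ of the previous section, and the coefficients $c_{a,j}$ depend only on $l,k,q$.

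The structural point I would exploit next is that the operator $\nabla_{t,\varphi}$, defined by $\nabla_{t,\varphi}\varphi=\sum_{m\ge 0}\frac1{(m+1)!}(ad\varphi)^m\!\left(\frac{\d\varphi}{\d t}\right)$, is \emph{linear} in $\frac{\d\varphi}{\d t}$ and invertible, its leading term being the identity. Applying its inverse to the defining Sato equation $\nabla_{t^*_{l,k},\varphi}\varphi=-\bigl(e^{lM}(q^{kL}-q^{-kL})\bigr)_-$ and feeding in the expansion above term by term gives
\[
\frac{\d\varphi}{\d t^*_{l,k}}=\sum_{a\ge 0}\sum_{j\ge 0} c_{a,j}\,\frac{\d\varphi}{\d t_{a,2j+1}},
\]
because each summand $-(M^a L^{2j+1})_-$ is precisely the right-hand side of the Sato equation defining $\partial_{t_{a,2j+1}}$. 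Since every flow here is an honest vector field on the space of dressing functions, determined by its value on $\varphi$, this identity says that as derivations $\partial_{t^*_{l,k}}=\sum_{a,j} c_{a,j}\,\partial_{t_{a,2j+1}}$.

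With this in hand the theorem is immediate: the commutator of vector fields is bilinear over the constants $c_{a,j}$, so using the preceding Proposition (together with the trivial commutativity of the ordinary flows, which covers the $a=0$ terms),
\[
[\partial_{t^*_{l,k}},\partial_{t_n}]=\sum_{a\ge 0}\sum_{j\ge 0} c_{a,j}\,[\partial_{t_{a,2j+1}},\partial_{t_n}]=0 ,
\]
which is the assertion. I expect the only real obstacle to be the analytic bookkeeping rather than the algebra: one must check that $e^{lM}(q^{kL}-q^{-kL})$ is a well-defined object and that the negative projection $(\,\cdot\,)_-$, the double sum over $(a,j)$, and the inverse operator $\nabla^{-1}_{t^*_{l,k},\varphi}$ may be interchanged termwise. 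Because $M$ carries arbitrarily high positive powers of $L$, the series $e^{lM}$ is genuinely infinite in the positive direction; the justification rests on the grading by powers of $k$, which ensures that each fixed order of $k^{-1}$ collects only finitely many contributions, so all the rearrangements are legitimate order by order. Once this is secured, the reduction to the previous Proposition makes the commutativity automatic.
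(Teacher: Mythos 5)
Your proposal is correct and is essentially the paper's own argument: the paper likewise expands $e^{lM}(q^{kL}-q^{-kL})$ into the odd-power series $\sum_{p,s}\frac{l^p(k\log q)^{2s+1}}{p!(2s+1)!}M^pL^{2s+1}$, identifies $\partial_{t^*_{l,k}}$ as the corresponding constant-coefficient combination of the $\partial_{t_{p,2s+1}}$ flows, and concludes by linearity from the preceding Proposition. The only differences are cosmetic (you work at the level of the Sato equation for $\varphi$ and discuss the grading that justifies the termwise rearrangement, while the paper states the identity directly on $L$; you also retain the factor $2$ from $2\sinh(kL\log q)$, which the paper drops but which is immaterial to commutativity).
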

\begin{proof}
According to the action of  $\partial_{t^*_{l,k}}$ and $\partial_{t_{n}}$ on the
dispersionless dressing function $L$,  we can rewrite the quantum torus flow $\partial_{t^*_{l,k}}$¡¡in terms of a combination of $\partial_{t_{p,s}}$ flows
\begin{eqnarray*}
\partial_{t^*_{l,k}}L &=& \{-(\sum_{p,s=0}^{\infty}\frac{l^p(k\log q)^{2s+1}M^pL^{2s+1}}{p!(2s+1)!})_-,L\}\\
 &=&\sum_{p,s=0}^{\infty}\frac{l^p(k\log q)^{2s+1}}{p!(2s+1)!}\partial_{t_{p,2s+1}}L,
\end{eqnarray*}
which further leads to
\begin{eqnarray*}
[\partial_{t^*_{l,k}},\partial_{t_{n}}]L &=& [\sum_{p,s=0}^{\infty}\frac{l^p(k\log q)^{2s+1}}{p!(2s+1)!}\partial_{t_{p,2s+1}},\partial_{t_{n}}]L\\
&=& \sum_{p,s=0}^{\infty}\frac{l^p(k \log q)^{2s+1}}{p!(2s+1)!}[\partial_{t_{p,2s+1}},\partial_{t_{n}}]L\\
 &=&0.
\end{eqnarray*}
Therefore the theorem holds.
\end{proof}
Because
\begin{eqnarray*}
[z^s\partial^p,z^b\partial^a]=\sum_{\alpha\beta}C_{\alpha\beta}^{(ps)(ab)}z^{\beta}\partial^{\alpha},
\end{eqnarray*}
and
\begin{equation}\label{zformala}
[q^{nz}e^{m\partial_z},q^{lz}e^{k\partial_z}]=(q^{ml}-q^{nk})q^{(n+l)z}e^{(m+k)\partial_z},
\end{equation}
therefore we can derive the following identity

\begin{eqnarray*}&&\{\{e^{nM}(q^{mL}-q^{-mL}),e^{lM}(q^{kL}-q^{-kL})\}\\
&&=(q^{ml}-q^{nk})e^{(n+l)M}(q^{(m+k)L}-q^{-(m+k)L})-(q^{ml}-q^{-nk})e^{(n+l)M}(q^{(m-k)L}-q^{(-m+k)L})\\
&&+(q^{ml}-q^{nk}+q^{-ml}-q^{-nk})e^{(n+l)M}(q^{(-m-k)L}-q^{(-m+k)L}).
\end{eqnarray*}
Now it is time to identify the algebraic structure of the
additional $\partial_{t_{l,k}^*}$ flows of the dispersionless BKP hierarchy in the following theorem.
\begin{theorem}
The additional flows $\partial_{t_{l,k}^*}$ of the dispersionless BKP hierarchy form a perturbative quantum torus algebra, i.e.,
\begin{equation}
[\partial_{t^*_{n,m}},\partial_{t^*_{l,k}}]L=[(q^{ml}-q^{nk})\partial_{t^*_{n+l,m+k}}-(q^{ml}-q^{-nk})\partial_{t^*_{n+l,m-k}}]L-\{A_{nmlk},L\},
\end{equation}
where
\[A_{nmlk}=(q^{ml}-q^{nk}+q^{-ml}-q^{-nk})e^{(n+l)M}(q^{(-m-k)L}-q^{(-m+k)L}).\]
\end{theorem}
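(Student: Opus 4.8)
The plan is to run the standard Lax-pair argument for additional symmetries, reducing the commutator of two starred flows to a single residual Poisson bracket, and then to substitute the explicit evaluation of $\{F,G\}$ recorded just above the theorem. Throughout write $F=e^{nM}(q^{mL}-q^{-mL})$ and $G=e^{lM}(q^{kL}-q^{-kL})$, so that by the flow equations $\partial_{t^*_{n,m}}L=-\{(F)_-,L\}$ and $\partial_{t^*_{l,k}}L=-\{(G)_-,L\}$, and likewise with $M$ in place of $L$. The preliminary point I would establish is that, because $L$ and $M$ both evolve under $\partial_{t^*_{n,m}}$ as the Hamiltonian vector field of $-(F)_-$ and the Poisson bracket is a derivation in each argument, every function of $L$ and $M$ obeys $\partial_{t^*_{n,m}}\,\cdot=-\{(F)_-,\cdot\}$; in particular $\partial_{t^*_{n,m}}(G)_-=-\{(F)_-,G\}_-$, the projection passing through the time derivative since it is linear and $t$-independent.

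First I would expand $[\partial_{t^*_{n,m}},\partial_{t^*_{l,k}}]L$ by Leibniz, using this observation on the inner $(G)_-$ and $(F)_-$, to obtain
\[
[\partial_{t^*_{n,m}},\partial_{t^*_{l,k}}]L=\{(\{(F)_-,G\})_--(\{(G)_-,F\})_-,L\}+\{(G)_-,\{(F)_-,L\}\}-\{(F)_-,\{(G)_-,L\}\}.
\]
The last two terms collapse, by the Jacobi identity, to $-\{\{(F)_-,(G)_-\},L\}$. Then I would split $F=(F)_++(F)_-$, $G=(G)_++(G)_-$ and invoke the two closure facts of the dispersionless calculus---namely that $\{(F)_+,(G)_+\}$ carries only nonnegative powers of $k$ and $\{(F)_-,(G)_-\}$ only negative powers---to prove the projection identity
\[
(\{(F)_-,G\})_--(\{(G)_-,F\})_-=\{F,G\}_-+\{(F)_-,(G)_-\}.
\]
Substituting it cancels the Jacobi contribution and yields the clean reduction $[\partial_{t^*_{n,m}},\partial_{t^*_{l,k}}]L=\{\{F,G\}_-,L\}$ (the global sign being fixed once and for all by the Poisson and flow conventions fixed above).

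With the reduction in hand, I would insert the explicit value of $\{F,G\}$ from the display preceding the theorem, which is itself the image under the torus isomorphism $e^{mM}q^{nL}\mapsto q^{nz}e^{m\partial_z}$ of the operator bracket \eqref{zformala}. Taking the negative projection and bracketing against $L$, the summand $e^{(n+l)M}(q^{(m+k)L}-q^{-(m+k)L})$ is exactly the generator of $\partial_{t^*_{n+l,m+k}}$ and $e^{(n+l)M}(q^{(m-k)L}-q^{-(m-k)L})$ that of $\partial_{t^*_{n+l,m-k}}$; up to the overall convention these reassemble into $(q^{ml}-q^{nk})\partial_{t^*_{n+l,m+k}}L-(q^{ml}-q^{-nk})\partial_{t^*_{n+l,m-k}}L$. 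The term that remains is exactly $\{A_{nmlk},L\}$.

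The essential obstacle is conceptual and lies in this last term. Unlike the first two, the generator $e^{(n+l)M}(q^{(-m-k)L}-q^{(-m+k)L})$ entering $A_{nmlk}$ is not of the symmetric shape $e^{aM}(q^{bL}-q^{-bL})$: its two powers $q^{-(m+k)L}$ and $q^{-(m-k)L}$ do not combine into a single $q^{bL}-q^{-bL}$, so $A_{nmlk}$ cannot be rewritten as any starred flow $\partial_{t^*_{a,b}}$. I would therefore verify that no linear combination of the $\partial_{t^*}$ generators reproduces it; granting that, $A_{nmlk}$ survives as a genuine residue, which is precisely the assertion that the quantum torus algebra is realized only \emph{perturbatively} (is ``broken'') for the dispersionless hierarchy. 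The one bookkeeping point to watch is tracking the $(\cdot)_-$ projections together with the overall sign consistently, so that the two flow terms and the anomaly emerge with the coefficients displayed in the statement.
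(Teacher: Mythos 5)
Your proposal is correct and follows essentially the same route as the paper: expand the commutator via Leibniz and the flow equations, collapse the cross terms by Jacobi, use the projection identity to reduce everything to $\{\{F,G\}_-,L\}$, and then substitute the explicit bracket of the two generators to read off the two torus flows plus the residual term $A_{nmlk}$. Your added remark that $A_{nmlk}$ is not of the antisymmetrized shape $e^{aM}(q^{bL}-q^{-bL})$ and hence cannot be absorbed into any $\partial_{t^*_{a,b}}$ is a worthwhile observation the paper leaves implicit, and your caution about the overall sign is warranted since the paper's own final substitution is not sign-consistent with its displayed bracket identity.
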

\begin{proof}
Using the Jacobi identity, we can derive the following computation which will finish the proof of this theorem

\begin{eqnarray*}
&&[\partial_{t^*_{n,m,\beta}},\partial_{t^*_{l,k}}]L\\
&=&\partial_{t^*_{n,m}}(\{-(e^{lM}(q^{kL}-q^{-kL}))_-,L\})-\partial_{t^*_{l,k}}(\{-(e^{nM}(q^{mL}-q^{-mL}))_-,L\}) \\
&=&\{-(\partial_{t^*_{n,m}} (e^{lM}(q^{kL}-q^{-kL})))_-,L\} +\{-(e^{lM}(q^{kL}-q^{-kL}))_-,(\partial_{t^*_{n,m}} L)\}
\\
&&+\{\{-(e^{lM}(q^{kL}-q^{-kL}))_-,e^{nM}(q^{mL}-q^{-mL})\}_-,L\}\\
&&+\{(e^{nM}(q^{mL}-q^{-mL}))_-,\{-(e^{lM}(q^{kL}-q^{-kL}))_-,L\}\} \\
&=&\{\{(e^{nM}(q^{mL}-q^{-mL}))_-,e^{lM}(q^{kL}-q^{-kL})\}_-,L\}\\
&& +\{(e^{lM}(q^{kL}-q^{-kL}))_-,\{(e^{nM}(q^{mL}-q^{-mL}))_-,L\}\}
\\
&&+\{\{-(e^{lM}(q^{kL}-q^{-kL}))_-,e^{nM}(q^{mL}-q^{-mL})\}_-,L\}\\
&&+\{(e^{nM}(q^{mL}-q^{-mL}))_-,\{-(e^{lM}(q^{kL}-q^{-kL}))_-,L\}\} \\
&=&\{\{(e^{nM}(q^{mL}-q^{-mL}))_-,e^{lM}(q^{kL}-q^{-kL})\}_-,L\} \\&&+\{\{(e^{lM}(q^{kL}-q^{-kL}))_-,(e^{nM}(q^{mL}-q^{-mL}))_-\},L\}
\\
&&+\{\{-(e^{lM}(q^{kL}-q^{-kL}))_-,e^{nM}(q^{mL}-q^{-mL})\}_-,L\}\\
&=&\{\{e^{nM}(q^{mL}-q^{-mL}),e^{lM}(q^{kL}-q^{-kL})\}_-,L\}\\
&=&-\{\{[(q^{ml}-q^{nk})e^{(n+l)M}(q^{(m+k)L}-q^{-(m+k)L})-(q^{ml}-q^{-nk})e^{(n+l)M}(q^{(m-k)L}-q^{(-m+k)L})\\
&&+(q^{ml}-q^{nk}+q^{-ml}-q^{-nk})e^{(n+l)M}(q^{(-m-k)L}-q^{(-m+k)L})]\}_-,L\}\\
&=&[(q^{ml}-q^{nk})\partial_{t^*_{n+l,m+k}}-(q^{ml}-q^{-nk})\partial_{t^*_{n+l,m-k}}]L-\{(A_{nmlk})_-,L\}.
\end{eqnarray*}

\end{proof}
Comparing the above proposition with \cite{torus} in which the complete quantum torus symmetry of the BKP hierarchy was constructed, the following remark should be noted.
\begin{remark}
The classical quantum torus structure
\begin{equation}
[\partial_{t^*_{n,m}},\partial_{t^*_{l,k}}]=(q^{ml}-q^{nk})\partial_{t^*_{n+l,m+k}}-(q^{ml}-q^{-nk})\partial_{t^*_{n+l,m-k}},
\end{equation}
is broken from the BKP hierarchy to its dispersionless hierarchy of which there exists one more perturbative term $\{A_{nmlk},L\}$.
\end{remark}

\section{ Multicomponent BKP hierarchy}

For an $N$-component BKP hierarchy, there are $N$ infinite families of time variables $t_{\alpha, n}, \alpha=1,\ldots,N, n=1,3,5,7,\ldots$. The coefficients $A,u_1, u_2,\ldots$ of the Lax operator
\begin{equation}\label{Lax}\L_B=A\pa +u_1\pa^{-1}+u_2\pa^{-2}+\ldots
\end{equation}
 are $N\times N$ matrices and $A=diag(a_1,a_2,\dots,a_N)$. There are another $N$ pseudo-differential operators $R_1,\ldots, R_N$ in the form of
$$R_{\alpha}=E_{\alpha}+u_{\alpha, 1}\pa^{-1}+u_{\alpha, 2}\pa^{-2}+\ldots,$$ where $E_{\alpha}$ is the $N\times N$ matrix with ``$1$" on the $(\alpha,\alpha)$-component and zero elsewhere, and $u_{\alpha, 1}, u_{\alpha, 2},\ldots$ are also $N\times N$ matrices. The operators $\L_B, R_1,\ldots, R_N$ satisfy the following conditions:
$$\L_BR_{\alpha}=R_{\alpha} \L_B, \quad R_{\alpha}R_{\beta}=\delta_{\alpha\beta}R_{\alpha}, \quad \sum_{\alpha=1}^N R_{\alpha}= E.$$
 Let $*$ denote a formal adjoint operation defined by $p^{*}=\sum(-1)^{i}\partial^{i}p_i^T$ for an arbitrary
matrix-valued pseudo-differential operator $p=\sum p_i\partial^{i}$ , and $(AB)^{*}=B^{*}A^{*}$ for two matrix-valued pseudo-differential operators $A,B$.
Here $\L_B$ must satisfy
\begin{equation}\label{abcd}
\L_B^*=-\partial \L_B \partial^{-1}.
\end{equation}

The Lax equations of the multicomponent BKP hierarchy are:
\begin{equation*}
\frac{\pa \L_B}{\pa t_{n}^{\alpha}}=[B_{\alpha, n}, \L_B],\hspace{1cm}\frac{\pa R_{\beta}}{\pa t_{n}^{\alpha}}=[B_{\alpha, n}, R_{\beta}],\hspace{1cm}B_{\alpha, n}:=(\L_B^n R_{\alpha})_+,\  \ n\in\Zop.
\end{equation*}
 The operator $\pa$ now is equal to $a_1^{-1}\pa_{t_{1}^1}+\ldots +a_N^{-1}\pa_{t_{1}^N}$.
In fact the Lax operators $\L_B$ and $R_{\alpha}$ can have the following dressing structures
\[\L_B=\Phi A\d \Phi^{-1}, \ \ R_{\alpha}=\Phi E_{\alpha}\Phi^{-1}.\]

Then the dressing operator $\Phi$ needs to satisfy \begin{equation}\label{phipsi}
\Phi^*=\d \Phi^{-1}\d^{-1}.
\end{equation}

We call the eq.\eqref{abcd} the B type condition of the $N$-component BKP hierarchy.

We can get the operators $B_{\alpha, n},R_j$ satisfy the following B type condition
\begin{equation}\label{phipsi}
B_{\alpha, n}^*=-\d B_{\alpha, n}\d^{-1}, R^{\ast}_j=\d R_j\d^{-1},
\end{equation}

and the dressing operator $\Phi$ satisfies the following Sato equations
\begin{equation*}
\frac{\pa \Phi}{\pa t_{n}^{\alpha}}=-(\L_B^n R_{\alpha})_-\Phi.
\end{equation*}

\section{Quantum torus symmetry of the multicomponent BKP hierarchy}

To construct the additional quantum torus symmetry of the multicomponent BKP hierarchy, firstly we define the operator $\Gamma_B$ and the Orlov-Shulman's  operator $\M_B$  as
 \begin{equation}
\Gamma_B=\sum_{i\in \Zop } \sum_{j=1}^Nit_i^jA^{-1}E_{jj}\partial^{i-1},\ \ \M_B= \Phi \Gamma_B \Phi^{-1}.
\end{equation}
The Lax operator $\L_B$ and the Orlov-Shulman's operator $\M_B$  satisfy the following matrix canonical relation
\[[\L_B,\M_B]=E.\]

Given an operator $\L_B$, the dressing operators $\Phi$ are determined uniquely up to a multiplication to the
right by operators with
constant coefficients.

We denote $t=(t_1,t_3,t_5,\dots)$ and introduce
the wave function as
\begin{align}\label{wavef}
w_B(t; z)=\Phi e^{\xi_B(t;z)},
\end{align}
where the matrix function $\xi_B$ is defined as $\xi_B(t;
z)=\sum_{k\in\Zop} \sum_{j=1}^Nt_k^jE_{jj} z^k$. It is easy to see
$\d^i e^{x z}=z^i e^{x z},\ \ i\in\Z$
and
\[
\L_B\,w_B(t;z)=z w_B(t;z),\ \ \frac{\d w_B}{\d t_{2n+1}^j}=(\L^{2n+1}_{B}R_{j})_+w_B.
\]

With the above preparation, it is time to  construct additional symmetries for the  multicomponent BKP hierarchy in the next part.
It is easy to get  that the operator $\M_B$  satisfy

\begin{equation}
[\L_B, \M_B]=1,  \
\M_B w_B(z)=\d_z w_B(z);
\end{equation}
\begin{equation}\label{bkpMt}
\frac{\d \M_B}{\d t_k^j}=[(\L_B^kR_{j})_+,\M_B],\ \ k\in\Zop.
\end{equation}

Given any pair of integers $(m,n)$ with $m,n\ge0$, we will introduce the following  matrix-valued  operator $B_{m nj}$
\begin{align}\label{defBoperator}
B_{m nj}=\M_B^m\L_B^{n}R_{j}-(-1)^{n} R_{j}\L_B^{n-1}\M_B^{m}\L_B.
\end{align}

For any  matrix operator $B_{m nj}$ in \eqref{defBoperator}, one has
\begin{align}\label{Bflow}
&\frac{\d B_{m nj}}{\d t_k^j}=[(\L_B^kR_{j})_+, B_{m nj}],  \ k\in\Zop.
\end{align}

To prove that the $B_{m nj}$  satisfies the B type condition, we need the following lemma.
\begin{lemma}\label{BtypM}
The  matrix operator $\M_B$ satisfies the following identity,
\[\label{MBproperty}
\M_B^*
=\d\L_B^{-1}\M_B\L _B\d^{-1}.\]
\end{lemma}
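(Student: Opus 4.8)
The plan is to prove Lemma~\ref{BtypM} by dressing down the corresponding elementary identity for $\Gamma_B$ and then applying the B type condition \eqref{phipsi} satisfied by $\Phi$. First I would recall that $\M_B = \Phi\,\Gamma_B\,\Phi^{-1}$ and $\L_B = \Phi A\d\,\Phi^{-1}$, so that $\L_B^{-1}=\Phi\,\d^{-1}A^{-1}\Phi^{-1}$ (formally, as a pseudo-differential operator). The strategy is to compute the adjoint $\M_B^*$ using the rule $(PQ)^*=Q^*P^*$ together with the key relation $\Phi^*=\d\,\Phi^{-1}\d^{-1}$, which gives $(\Phi^{-1})^*=\d\,\Phi\,\d^{-1}$. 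Thus I would write
\begin{equation*}
\M_B^* = (\Phi\,\Gamma_B\,\Phi^{-1})^* = (\Phi^{-1})^*\,\Gamma_B^*\,\Phi^* = \d\,\Phi\,\d^{-1}\,\Gamma_B^*\,\d\,\Phi^{-1}\d^{-1}.
\end{equation*}
The whole proof then reduces to computing $\Gamma_B^*$ explicitly and recognizing that $\d^{-1}\Gamma_B^*\,\d$ has the shape needed to reassemble $\L_B^{-1}\M_B\L_B$ after conjugation by $\Phi$.

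The central computation is therefore the adjoint of $\Gamma_B=\sum_{i\in\Zop}\sum_{j=1}^N i\,t_i^j A^{-1}E_{jj}\,\d^{i-1}$. Since $A$ and the $E_{jj}$ are diagonal (hence symmetric) constant matrices and the $t_i^j$ are scalars, the matrix-transpose part is trivial, and one only needs $(\d^{i-1})^*=(-1)^{i-1}\d^{i-1}=\d^{i-1}$ because $i$ is odd; so each summand is essentially self-adjoint up to this sign bookkeeping. The goal is to show that at the level of $\Gamma_B$ one has the model identity $\Gamma_B^*=\d\,(A\d)^{-1}\,\Gamma_B\,(A\d)\,\d^{-1}$, which is the undressed analogue of the claimed formula. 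Verifying this is a direct, if slightly tedious, manipulation of the odd powers of $\d$ against the factor $A^{-1}E_{jj}$; here I would use that $A\d$ and $\d^{-1}A^{-1}$ commute appropriately with the scalar time variables, so the conjugation $(A\d)^{-1}\Gamma_B(A\d)$ merely shifts the power of $\d$ in a controlled way.

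Once the model identity for $\Gamma_B$ is in hand, I would dress it up: conjugating both sides by $\Phi$ and inserting $\Phi^{-1}\Phi=1$ where needed turns $\d\,(A\d)^{-1}\Gamma_B(A\d)\,\d^{-1}$ into $\d\,\L_B^{-1}\M_B\L_B\,\d^{-1}$, because $\Phi(A\d)^{-1}\Phi^{-1}=\L_B^{-1}$, $\Phi\Gamma_B\Phi^{-1}=\M_B$, and $\Phi(A\d)\Phi^{-1}=\L_B$, while the outer $\d^{\pm1}$ factors survive intact after using \eqref{phipsi} as above. Combining this with the adjoint expression for $\M_B^*$ yields exactly $\M_B^*=\d\,\L_B^{-1}\M_B\L_B\,\d^{-1}$, which is the assertion of the lemma.

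The main obstacle I anticipate is the careful treatment of the non-commutativity and the adjoint of $A\d$: because $A$ is a constant diagonal matrix one has $(A\d)^*=-\d A=-A\d$, and matching this sign against the B type condition $\L_B^*=-\d\L_B\d^{-1}$ requires keeping precise track of where the minus signs from odd powers of $\d$ land. Relatedly, I must make sure the formal manipulation $\L_B^{-1}=\Phi\,\d^{-1}A^{-1}\Phi^{-1}$ and the rearrangement of $(A\d)^{-1}\Gamma_B(A\d)$ are legitimate as identities of pseudo-differential operators rather than merely leading-symbol statements; confirming that the conjugation acts correctly on every term of the infinite sum defining $\Gamma_B$ is where the real care is needed, even though each individual step is elementary.
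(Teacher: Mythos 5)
Your proposal is correct and follows essentially the same route as the paper: compute $\M_B^*=(\Phi^{-1})^*\Gamma_B^*\Phi^*$ using $\Phi^*=\d\Phi^{-1}\d^{-1}$, reduce to an adjoint identity for the undressed operator $\Gamma_B$, and reassemble the conjugation into $\d\L_B^{-1}\M_B\L_B\d^{-1}$. Your ``model identity'' $\Gamma_B^*=\d(A\d)^{-1}\Gamma_B(A\d)\d^{-1}$ is just $\Gamma_B^*=\Gamma_B$ in disguise (since the diagonal matrix $A$ commutes with $\Gamma_B$ and the $\d$ factors cancel), which is exactly the fact the paper invokes; if anything, your version makes explicit the step the paper glosses over, namely why $\Phi\d\Phi^{-1}$ may be replaced by $\L_B=\Phi A\d\Phi^{-1}$ in the final conjugation.
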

\begin{proof}
Using identities as
\[
 \Phi^*=\d\Phi^{-1} \d^{-1},\ \ \Gamma_B^*=\Gamma_B;\]
the  following calculations
\[
\M_B^* =\Phi^{*-1}\Gamma_B \Phi^*=\d\Phi \d^{-1}\Gamma_B \d\Phi^{-1} \d^{-1}
=\d\Phi \d^{-1}\Phi^{-1}\M_B\Phi \d\Phi^{-1} \d^{-1},\]
will lead to the lemma.
\end{proof}
 Basing on the Lemma \ref{BtypM} above, it is easy to check that the  matrix-valued operator
$B_{m nj}$  satisfy the B type condition, namely
\begin{equation}\label{btypeB}
B_{m nj}^*=-\d  B_{m nj} \d^{-1}.
\end{equation}

Now we will denote the  matrix operator $D_{m nj}$ as
\begin{equation}
D_{m nj}:=e^{m\M_B}q^{n\L_B}R_{j}-\L_B^{-1}R_{j}q^{-n\L_B}e^{m\M_B}\L_B,
\end{equation}
which further leads to
\begin{equation}
D_{m nj}=\sum_{p,s=0}^{\infty}\frac{m^p(n\log q)^s(\M_B^p\L_B^sR_{j}-(-1)^sR_{j}\L_B^{s-1}\M_B^p\L_B)}{p!s!}=\sum_{p,s=0}^{\infty}\frac{m^p(n\log q)^sB_{p sj}}{p!s!}.
\end{equation}
Using the eq. \eqref{btypeB}, the following calculation will lead to the B type anti-symmetry property of $D_{m nj}$ as
\begin{eqnarray*}D_{m nj}^*
&=&(\sum_{p,s=0}^{\infty}\frac{m^p(n\log q)^sB_{p sj}}{p!s!})^*\\
&=&-(\sum_{p,s=0}^{\infty}\frac{m^p(n\log q)^s\d B_{p sj}\d^{-1}}{p!s!})\\
&=&-\d(\sum_{p,s=0}^{\infty}\frac{m^p(n\log q)^sB_{p sj}}{p!s!})\d^{-1}\\
&=&-\d  D_{m nj} \d^{-1}.
\end{eqnarray*}
Therefore we get the following important B type condition, i.e.  the  matrix operator $D_{m nj}$ satisfies
\begin{equation}
D_{m nj}^*=-\d  D_{m nj} \d^{-1}.
\end{equation}

Then basing on a quantum parameter $q$, the additional flows for the time variable $t_{m,n}^j,t_{m,n}^{*j}$ are
defined as follows
\begin{equation}
\dfrac{\partial \Phi}{\partial t_{m,n}^j}=-(B_{m nj})_-\Phi,\ \
 \dfrac{\partial \Phi}{\partial t^{*j}_{m,n}}=-(D_{m nj})_-\Phi,
\end{equation}
or equivalently rewritten as

\begin{equation}
\dfrac{\partial \L_B}{\partial t_{m,n}^j}=-[(B_{m nj})_-,\L_B], \qquad
\dfrac{\partial \M_B}{\partial t^{*j}_{m,n}}=-[(D_{m nj})_-,\M_B].
\end{equation}

 Generally, one can also derive
\begin{equation}\label{bkpMLK}
\partial_{t^{*i}_{l,k}}(D_{m nj})=[-(D_{l ki})_-,D_{m nj}].
\end{equation}

The way in the construction of the  matrix-valued operator $D_{m nj}$ depends on the reduction condition in  the eq. (\ref{abcd})  on the generators of the additional flows.

 This further leads to the commutativity of the additional flow $\dfrac{\partial }{\partial
t^{*j}_{m,n}}$  with the flow $\dfrac{\partial
}{\partial t_k^j}$ in the following theorem.

\begin{theorem}
The additional flows of $\partial_{t^{*s}_{l,k}}$ are  symmetries of the  multicomponent BKP hierarchy, i.e. they commute with all $\partial_{t_n^j}$ flows of the   multicomponent BKP hierarchy.
\end{theorem}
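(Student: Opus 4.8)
The plan is to imitate the proof of the corresponding theorem for the dispersionless hierarchy in Section 3: first express the quantum torus flow $\partial_{t^{*s}_{l,k}}$ as a formal linear combination of the additional $\partial_{t_{p,r}^s}$ flows, and then show that each of the latter already commutes with every $\partial_{t_n^j}$ flow. The expansion $D_{m nj}=\sum_{p,s\ge0}\frac{m^p(n\log q)^s}{p!s!}B_{p sj}$ translates, at the level of the dressing operator, into $\partial_{t^{*s}_{l,k}}=\sum_{p,r\ge0}\frac{l^p(k\log q)^r}{p!r!}\partial_{t_{p,r}^s}$, so that by bilinearity of the commutator $[\partial_{t^{*s}_{l,k}},\partial_{t_n^j}]=\sum_{p,r\ge0}\frac{l^p(k\log q)^r}{p!r!}[\partial_{t_{p,r}^s},\partial_{t_n^j}]$. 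Hence it suffices to prove $[\partial_{t_{p,r}^s},\partial_{t_n^j}]=0$ for every choice of indices.

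The key to this last step is that both flows act on $\Phi$ through the negative part of a generator: the ordinary flow by the Sato equation $\partial_{t_n^j}\Phi=-(\L_B^nR_j)_-\Phi$, and the additional flow by $\partial_{t_{p,r}^s}\Phi=-(B_{p rs})_-\Phi$. Writing $A=\L_B^nR_j$ and $B=B_{p rs}$ and differentiating one flow equation along the other, a short computation gives $[\partial_{t_{p,r}^s},\partial_{t_n^j}]\Phi=Z\Phi$ with $Z=-(\partial_{t_{p,r}^s}A)_-+(\partial_{t_n^j}B)_-+[A_-,B_-]$. I would now feed in two covariance identities. First, since $\L_B$, $\M_B$ and $R_\beta$ each evolve under $\partial_{t_n^j}$ as $[\,(\L_B^nR_j)_+,\,\cdot\,]$ (this is \eqref{bkpMt} together with the Lax equations for $\L_B$ and $R_\beta$), and $B_{p rs}$ is a polynomial in these operators, the operator $B$ inherits the same covariance, namely \eqref{Bflow}: $\partial_{t_n^j}B=[A_+,B]$. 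Second, because $\L_B$ and $R_j$ are dressed by $\Phi$, the additional flow acts on $A=\L_B^nR_j$ as $\partial_{t_{p,r}^s}A=[-B_-,A]$. Substituting these, using $[A_+,B_+]_-=0$ and the antisymmetry $[A_-,B_-]+[B_-,A_-]=0$, collapses $Z$ to the two surviving cross terms $[B_-,A_+]_-+[A_+,B_-]_-=0$, whence $[\partial_{t_{p,r}^s},\partial_{t_n^j}]=0$.

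The part that needs the most care is the projection bookkeeping in this last step and, hidden inside it, the claim that the covariance \eqref{Bflow} holds for an arbitrary component index $j$ that need not equal $s$. This is exactly where the multicomponent BKP structure is used: the relations $[\L_B,R_j]=0$ and $[\M_B,R_j]=0$ guarantee that $\ad\,(\L_B^nR_j)_+$ acts compatibly as a derivation on each factor $\M_B$, $\L_B$, $R_s$ of $B_{p rs}$, so that $B_{p rs}$ transforms homogeneously rather than picking up inhomogeneous correction terms; this is what makes \eqref{Bflow} valid in the generality needed here. Once that is in hand, the decomposition of the algebra of matrix pseudo-differential operators into its nonnegative and negative parts forces the cross terms to cancel, and the vanishing of each $[\partial_{t_{p,r}^s},\partial_{t_n^j}]$ combines with the series above to give $[\partial_{t^{*s}_{l,k}},\partial_{t_n^j}]=0$, proving the theorem. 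I note that the very same $Z$-computation can be run directly with $B$ replaced by $D_{l ks}$, bypassing the expansion entirely, since $D_{l ks}$ is assembled from the same covariant operators $\M_B,\L_B,R_s$; I would nonetheless keep the expansion version in the write-up to remain parallel with the dispersionless treatment of Section 3.
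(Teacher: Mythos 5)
Your proof is correct and follows the route the paper itself only gestures at: the paper's own ``proof'' of this theorem is a one-line appeal to Theorem 6.2 of \cite{torus} with every detail omitted, and your two-step argument --- expanding $\partial_{t^{*s}_{l,k}}$ into the series of $\partial_{t^{s}_{p,r}}$ flows via the expansion of $D_{lks}$ in the $B_{prs}$, then killing each $[\partial_{t^{s}_{p,r}},\partial_{t^{j}_{n}}]$ by the Sato-equation computation with the projection bookkeeping $[A_+,B_+]_-=0$ and $[A_-,B_-]_-=[A_-,B_-]$ --- is exactly the argument that reference and the paper's parallel dispersionless treatment in Section 3 rely on. Your observation that \eqref{Bflow} must be upgraded to mixed component indices $j\neq s$, justified by $[\L_B,R_j]=[\M_B,R_j]=0$ so that $\mathrm{ad}\,(\L_B^nR_j)_+$ acts as a derivation on every factor of $B_{prs}$, is a genuine detail the paper glosses over, and your handling of it is sound.
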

\begin{proof}
The proof is similar as the KP hierarchy by using the Theorem 6.2 in \cite{torus}, i.e.
the additional flows of $\partial_{t_{l,k}^{*s}}$ can commute with all $\partial_{t_n^j}$ flows of the   multicomponent BKP hierarchy. The detail will be omitted here.
\end{proof}
Comparing with the additional symmetry of the single-component BKP hierarchy in \cite{TU}, the additional flows $\partial_{t_{l,k}^s}$ of the  multicomponent BKP hierarchy form the following $N$-folds direct product of the
$W_{\infty}$ algebra  as following
\begin{eqnarray*}
&&[\partial_{t_{p,s}^r},\partial_{t_{a,b}^c}]\L_B=\delta_{rc}\sum_{\alpha\beta}C_{\alpha\beta}^{(ps)(ab)}\partial_{t_{\alpha,\beta}^c}\L_B.
\end{eqnarray*}

Now it is time to identity the algebraic structure of the
additional $t_{l,k}^{*j}$ flows of the  multicomponent BKP hierarchy.
\begin{theorem}\label{bkpalg}
The additional flows $\partial_{t^{*i}_{l,k}}$ of the  multicomponent BKP hierarchy form the $\bigotimes^NQT_+ $ algebra
(an $N$-folds direct product of the
positive half of the quantum torus algebra $QT$), i.e.,
\begin{equation}
[\partial_{t^{*r}_{n,m}},\partial_{t^{*j}_{l,k}}]=\delta_{rs}(q^{ml}-q^{nk})\partial_{t^{*r}_{n+l,m+k}},\ \ n,m,l,k\geq 0, \ \ 1\leq r,j\leq N.
\end{equation}

\end{theorem}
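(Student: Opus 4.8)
The plan is to prove the commutator identity for the quantum-torus additional flows by mirroring the structure already used in the single-component case (Theorem 3.4), but now carrying the projection component index $j$ and exploiting the key algebraic fact that the operators $D_{m nj}$ satisfy the B type condition \eqref{btypeB}. First I would compute $[\partial_{t^{*r}_{n,m}},\partial_{t^{*j}_{l,k}}]\Phi$ directly by applying each flow to the dressing operator and using the defining relations $\partial_{t^{*j}_{m,n}}\Phi=-(D_{m nj})_-\Phi$. Expanding the double action and reorganizing with the Jacobi identity, the computation should collapse (exactly as in the proof of Theorem 3.4) to a single term of the form $[-(\,[D_{n mr},D_{l kj}]\,)_-,\,\cdot\,]$, where the inner commutator is taken in the full associative algebra of matrix pseudo-differential operators. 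The routine projection-algebra manipulations (splitting $X=X_+ + X_-$ and cancelling the mixed $+/-$ pieces) are identical to the earlier proof, so I would cite that structure rather than repeat it.

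The decisive step is evaluating the inner bracket $[D_{n mr},D_{l kj}]$. Here I would use the isomorphism recorded in Section 3 between $e^{mM}q^{nL}$ and $q^{nz}e^{m\partial_z}$ together with the quantum-torus relation \eqref{zformala}, which gives $[q^{nz}e^{m\partial_z},q^{lz}e^{k\partial_z}]=(q^{ml}-q^{nk})q^{(n+l)z}e^{(m+k)\partial_z}$. The point is that the matrix factor $R_j$ is idempotent and orthogonal, $R_rR_j=\delta_{rj}R_r$, so that the product of two generators carrying indices $r$ and $j$ survives only when $r=j$; this is precisely the source of the Kronecker delta $\delta_{rj}$ and of the direct-product structure $\bigotimes^N QT_+$. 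Because the generators here are built from the \emph{nonnegative} exponents $n,m,l,k\ge 0$ (the positive half), only the $q^{+nL}$-type contributions appear and the closure is within the same family, so no perturbative correction term arises — in contrast with the dispersionless case of Theorem 3.4.

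The main obstacle I anticipate is controlling the difference between the matrix operator $D_{m nj}$ as actually defined, $D_{m nj}=e^{m\M_B}q^{n\L_B}R_j-\L_B^{-1}R_j q^{-n\L_B}e^{m\M_B}\L_B$, and the single ``pure'' generator $e^{m\M_B}q^{n\L_B}R_j$ whose bracket closes cleanly. The second, B-type-symmetrizing term is needed to enforce $D_{m nj}^*=-\d D_{m nj}\d^{-1}$, but it threatens to produce extra pieces in the commutator. I would handle this by showing, using the canonical relation $[\L_B,\M_B]=E$, the idempotency $R_j^2=R_j$, and Lemma \ref{BtypM}, that the symmetrizing terms reassemble into exactly the same B-type-symmetrized combination one index-level up, i.e. that the bracket of two symmetrized generators is again a symmetrized generator with coefficient $(q^{ml}-q^{nk})\delta_{rj}$; equivalently, that the B type condition is preserved under the bracket so the computation descends to the quotient where only the pure quantum-torus relation \eqref{zformala} is visible. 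Once this compatibility is verified, substituting back through $D_{m nj}=\sum_{p,s}\tfrac{m^p(n\log q)^s}{p!s!}B_{p sj}$ and re-expressing the result in terms of $\partial_{t^{*r}_{n+l,m+k}}$ yields the claimed identity.
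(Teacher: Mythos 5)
Your route is genuinely different from the paper's, and it has a gap at precisely the step you yourself flag as decisive. The paper never brackets the exponential generators $D_{m nj}$ directly. Instead it expands each flow $\partial_{t^{*r}_{n,m}}$ as the double series $\sum_{p,s}\frac{n^p(m\log q)^s}{p!s!}\partial_{t_{p,s}^r}$ (coming from $D_{m nj}=\sum_{p,s}\frac{m^p(n\log q)^s}{p!s!}B_{p sj}$), invokes the $W_\infty$-type relation $[\partial_{t_{p,s}^r},\partial_{t_{a,b}^j}]\L_B=\delta_{rj}\sum_{\alpha\beta}C_{\alpha\beta}^{(ps)(ab)}\partial_{t_{\alpha,\beta}^r}\L_B$ stated just before the theorem, and then resums the double exponential series, using that the generating function of the structure constants $C_{\alpha\beta}^{(ps)(ab)}$ reproduces the factor $(q^{ml}-q^{nk})$ at the shifted index $(n+l,m+k)$. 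All of the difficulty is thereby pushed onto the polynomial flows $\partial_{t_{p,s}^j}$ and the quoted $W_\infty$ commutation relation; the exponential generators are never multiplied against each other.

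Your plan instead requires the closure identity $[D_{n mr},D_{l kj}]=\delta_{rj}(q^{ml}-q^{nk})D_{(n+l)(m+k)r}$, and your justification --- that ``only the $q^{+nL}$-type contributions appear'' so the bracket closes within the positive half with no extra terms --- is not tenable as stated. The B-type symmetrizing part of $D_{m nj}$ is $-\L_B^{-1}R_j q^{-n\L_B}e^{m\M_B}\L_B$, which explicitly carries $q^{-n\L_B}$. The cross terms in $[D_{n mr},D_{l kj}]$ therefore produce $q^{\pm(m-k)\L_B}$ factors, and these are exactly the origin of the second term $-(q^{ml}-q^{-nk})\partial_{t^*_{n+l,m-k}}$ in the single-component BKP quantum torus algebra recorded in the remark at the end of Section 3 (and of the perturbative correction $A_{nmlk}$ in the dispersionless theorem there). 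Such terms do not cancel for free, so the ``reassembly of the symmetrizing terms one index-level up'' that you defer is not a routine verification --- it is the entire content of the theorem, and a naive execution of your computation would produce contributions of type $\partial_{t^{*r}_{n+l,m-k}}$ that are absent from the claimed one-term answer (and lie outside the positive half when $m<k$). To salvage the direct route you would have to prove the closure identity for the symmetrized exponential generators explicitly, tracking and cancelling the $q^{(m-k)\L_B}$ pieces; otherwise you should fall back on the paper's resummation through the $B_{p sj}$, where the orthogonality $R_rR_j=\delta_{rj}R_r$ enters only through the quoted relation for the polynomial flows.
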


\begin{proof}

One can also prove this theorem as following by rewriting the quantum torus flow in terms of a combination of $\partial_{t_{m,n}^j}$ flows
\begin{eqnarray*}
&&[\partial_{t^{*r}_{n,m}},\partial_{t^{*j}_{l,k}}]\L_B\\
&=&[\sum_{p,s=0}^{\infty}\frac{n^p(m\log q)^s}{p!s!}\partial_{t_{p,s}^r},\sum_{a,b=0}^{\infty}\frac{l^a(k\log q)^b}{a!b!}\partial_{t_{a,b}^j}]\L_B\\
&=&\sum_{p,s=0}^{\infty}\sum_{a,b=0}^{\infty}\frac{n^p(m\log q)^s}{p!s!}\frac{l^a(k\log q)^b}{a!b!}[\partial_{t_{p,s}^r},\partial_{t_{a,b}^j}]\L_B\\
&=&\sum_{p,s=0}^{\infty}\sum_{a,b=0}^{\infty}\frac{n^p(m\log q)^s}{p!s!}\frac{l^a(k\log q)^b}{a!b!}\sum_{\alpha\beta}C_{\alpha\beta}^{(ps)(ab)}\delta_{rj}\partial_{t_{\alpha,\beta}^r}\L_B\\
&=&(q^{ml}-q^{nk})\sum_{\alpha,\beta=0}^{\infty}\frac{(n+l)^\alpha((m+k)\log q)^\beta}{\alpha!\beta!}\delta_{rj}\partial_{t_{\alpha,\beta}^r}\L_B\\
&=&(q^{ml}-q^{nk})\delta_{rj}\partial_{t^{*r}_{n+l,m+k}}\L_B.
\end{eqnarray*}
\end{proof}

From these above, we can find that the quantum torus symmetry can be generalized from the BKP hierarchy to the multicomponent BKP hierarchy by the multi-folds product.

{\bf {Acknowledgements:}}
 This work is supported by the National Natural Science Foundation of China under Grant No. 11571192 and the K. C. Wong Magna Fund in
Ningbo University.

\end{document}